\documentclass{article}

\usepackage[a4paper, total={6in, 8in}]{geometry}

\usepackage[utf8]{inputenc}

\usepackage{amsmath}
\usepackage{amsthm}
\usepackage{amssymb}

\usepackage{xurl}
\usepackage{hyperref}

\usepackage{algorithm}
\usepackage{algorithmic}

\usepackage{verbatim}

\usepackage{nicefrac}

\usepackage{tikz}
\usetikzlibrary{positioning}
\tikzset{main node/.style={circle,draw,minimum size=2cm,inner sep=0pt},
}

\newtheorem{theorem}{Theorem} 
\newtheorem{proposition}{Proposition}
\newtheorem{lemma}[proposition]{Lemma}

\newcommand{\eGFB}{\epsilon\mathrm{GFB}}
\newcommand{\DUPOC}{\mathrm{DUPOC}}
\newcommand{\CUPOD}{\mathrm{CUPOD}}
\newcommand{\CIMCIC}{\mathrm{CIMCIC}}
\newcommand{\DIMCID}{\mathrm{DIMCID}}
\newcommand{\CwC}{\mathrm{CwC}}

\newcommand{\PrudentBot}{\mathrm{PB}}
\newcommand{\PA}{\mathrm{PA}}
\newcommand{\DefectBot}{\mathrm{DB}}
\newcommand{\Prov}{\mathrm{Prov}}
\newcommand{\PROG}{\mathrm{PROG}}
\newcommand{\prog}{\mathrm{prog}}

\usepackage{multirow,array}
\newcolumntype{L}[1]{>{\raggedright\let\newline\\\arraybackslash\hspace{0pt}}m{#1}}
\newcolumntype{C}[1]{>{\centering\let\newline\\\arraybackslash\hspace{0pt}}m{#1}}
\newcolumntype{R}[1]{>{\raggedleft\let\newline\\\arraybackslash\hspace{0pt}}m{#1}}

\usepackage{cleveref}

\title{A Note on the Compatibility of Different Robust Program Equilibria of the Prisoner's Dilemma}
\author{Caspar Oesterheld}
\date{November 2022}

\usepackage[backend=biber,citestyle=authoryear,style=authoryear-comp,natbib=true,date=year]{biblatex}
\bibliography{refs}

\begin{document}

\maketitle

\begin{abstract}
We study a program game version of the Prisoner's Dilemma, i.e., a two-player game in which each player submits a computer program, the programs are given read access to each other's source code and then choose whether to cooperate or defect.
Prior work has introduced various programs that form cooperative equilibria against themselves in this game. For example, the $\epsilon$-grounded Fair Bot cooperates with probability $\epsilon$ and with the remaining probability runs its opponent's program and copies its action. If both players submit this program, then this is a Nash equilibrium in which both players cooperate. Others have proposed cooperative equilibria based on proof-based Fair Bots, which cooperate if they can prove that the opponent cooperates (and defect otherwise). We here show that these different programs are compatible with each other. For example, if one player submits $\epsilon$-grounded Fair Bot and the other submits a proof-based Fair Bot, then this is also a cooperative equilibrium of the program game version of the Prisoner's Dilemma.
\end{abstract}

\textbf{Keywords}: program equilibrium, open-source game theory, cooperative AI, FairBot, Prisoner's Dilemma, L\"{o}b's theorem, PrudentBot, DUPOC, $\epsilon$-grounded Fair Bot

\section{Introduction}

\citet[Section 10.4]{Rubinstein1998} and \citet{Tennenholtz2004} introduced the concept of \textit{program games}. Given some normal-form game such as the Prisoner's Dilemma, we consider the following new game. Each player submits a computer program. The programs are given access to each other's source code and output a strategy for the base game. The original player's utilities are realized as a function of the program outputs. I describe the setup in more detail in \Cref{sec:setup}. I defer to prior work for motivations to study this type of setup.

It turns out that the observation of one another's source code allows for new equilibrium outcomes. In particular, it allows for mutual cooperation in the Prisoner's Dilemma. This was first shown by \citet{McAfee1984} and \citet{Howard1988} and again by Rubinstein and Tennenholtz via the following program: If the opponent's program is equal to this program, cooperate; else defect. If both players submit this program, they both cooperate and neither player can profitably deviate. \citet[Sect.\ 10.4]{Rubinstein1998} and \citet{Tennenholtz2004} both give folk theorems (i.e., theorems characterizing what payoffs can be achieved in Nash equilibria of the meta game) for program games using programs like the above.

The ``Cooperate with Copies'' equilibrium is brittle -- it relies on both players submitting the exact same program. A subsequent line of work has tried to propose more robust ways of achieving the cooperative equilibrium in at least the Prisoner's Dilemma \parencites{Barasz2014}{Critch2019}{RobustProgramEquilibrium}{Critch2022}. (None of these approaches are sufficient to prove Tennenholtz' folk theorem in general $n$-player games.) I introduce these approaches in \Cref{sec:introducing-the-programs}.

A natural question then is whether these different robust cooperative equilibria are compatible with each other. That is, if the players use different supposedly robust approaches to establishing cooperation, do they still cooperate with each other? In this note, I answer this question in the affirmative. In particular, I show that the $\epsilon$-grounded Fair Bot of \citet{RobustProgramEquilibrium} is compatible with the proof-based L\"{o}bian Fair Bots of \citet{Barasz2014} and \citet{Critch2022}. Under a minor modification, it is also compatible with the Prudent Bot proposed by \citet{Barasz2014}. This compatibility is essentially due to the fact that playing against $\eGFB$ is like playing against a copy. The different proof-based bots are also compatible with each other, as has mostly already been shown in previous work. 

\section{Defining the program game Prisoner's Dilemma}
\label{sec:setup}

We here define the program game version of the Prisoner's Dilemma. Program games in general are defined by, e.g., \citet{Tennenholtz2004} and \citet{RobustProgramEquilibrium}. We here only consider the special case where the base game is the Prisoner's Dilemma as given in \Cref{table:prisoners-dilemma}. 

\begin{table}
	\begin{center}
    \setlength{\extrarowheight}{2pt}
    \begin{tabular}{cc|C{1.6cm}|C{1.6cm}|C{1.6cm}|}
      & \multicolumn{1}{c}{} & \multicolumn{2}{c}{Player 2}\\
      & \multicolumn{1}{c}{}  & \multicolumn{1}{c}{Cooperate} & \multicolumn{1}{c}{Defect} \\\cline{3-4}
      \multirow{2}*{Player 1} & Cooperate & $3,3$ & $0,4$  \\\cline{3-4}
      & Defect & $4,0$ & $1,1$  \\\cline{3-4}
    \end{tabular}
    \end{center}
    \caption{The Prisoner's Dilemma}
    \label{table:prisoners-dilemma}
\end{table}

The program game Prisoner's Dilemma is a game where each player $i=1,2$ choose a program from some set $\PROG_i$. Each program in $\PROG_i$ induces a function $\PROG_{-i} \rightsquigarrow \{D,C\}$ that probabilistically maps the opponent's program onto a strategy for the Prisoner's Dilemma. (We thus assume that the programs are either guaranteed to halt or that non-halting is mapped onto some default outcome, say, $D$.) We will generally assume that each player's program also has direct access to its own source code. It is, however, possible to obtain all the same results without this assumption, i.e., assuming that programs only have access to their opponent's source code. We would then have to write programs that can reproduce their own source code, sometimes called quines \parencites{GEB}{Thompson1984}. Writing short quines in various programming languages has become a common challenge among programmers. In the context of program equilibrium, the use of such programs has been discussed theoretically by \citet{McAfee1984} and \citet{Rubinstein1998}. \citet{OesterheldTransparentPDTest} gives a quining-based version of $\epsilon$-grounded Fair Bot in Lisp/Scheme/Racket.

If players 1 and 2 choose programs $\prog_1$ and $\prog_2$ respectively, Player $i$'s utility is simply given by $u_i(\prog_1(\prog_{2}),\prog_2(\prog_1))$,
where $u_i$ is Player $i$'s utility function in the Prisoner's Dilemma.
That is, we first generate strategies for the Prisoner's Dilemma by running the programs (with their respective opponent's program as input) and then assign utilities to the resulting strategy profile as per the Prisoner's Dilemma.

\section{Defining the programs that achieve cooperative equilibria}
\label{sec:introducing-the-programs}

In this section, I introduce the different programs that achieve cooperative equilibrium against themselves. I refer to prior work for more detailed descriptions and discussions of these programs.

We start with the program that cooperates against itself and defect against everyone else \parencites{McAfee1984}{Howard1988}[][Sect.\ 10.4]{Rubinstein1998}{Tennenholtz2004}.

\parbox{\linewidth}{
\begin{flushleft}
\textit{Cooperate with Copies ($\CwC$)}:\\
\textbf{Input}: opponent program $\prog_{-i}$, this program $\CwC$\\
\textbf{Output}: Cooperate or Defect
\begin{algorithmic}[1] 
\IF{$\prog_{-i}=\CwC$}
    \STATE \textbf{return} Cooperate
\ENDIF
\STATE \textbf{return} Defect
\end{algorithmic}
\end{flushleft}
}

\begin{proposition}
$(\CwC, \CwC)$ is a Nash equilibrium and yields $(C,C)$.
\end{proposition}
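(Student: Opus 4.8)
The plan is to verify the two claims directly from the definition of $\CwC$, since that program's behavior depends only on a syntactic equality test applied to the opponent's source code. First I would establish the outcome: when both players submit $\CwC$, each running copy is invoked with input $\prog_{-i} = \CwC$, so the guard is satisfied and each copy returns Cooperate. Hence the strategy profile realized in the base game is $(C,C)$, which by \Cref{table:prisoners-dilemma} gives both players utility $3$.

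For the equilibrium property I would fix Player~$2$'s program to $\CwC$ and bound the payoff of an arbitrary program $\prog_1 \in \PROG_1$ for Player~$1$; the argument for Player~$2$ is symmetric. Split into two cases according to whether $\prog_1$ equals $\CwC$ as a source string. If $\prog_1 = \CwC$, this is the equilibrium profile and Player~$1$ gets $3$ by the previous paragraph. If $\prog_1 \neq \CwC$, then Player~$2$'s copy of $\CwC$ is invoked with an input distinct from $\CwC$, so its guard fails and it returns Defect; against a definite $D$, Player~$1$ receives $0$ when its program outputs $C$ and $1$ when it outputs $D$, so (taking the expectation over any internal randomization of $\prog_1$) Player~$1$'s utility is at most $1 < 3$. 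Hence no choice of $\prog_1$ beats the equilibrium payoff of $3$, and by symmetry the same holds for Player~$2$, so $(\CwC,\CwC)$ is a Nash equilibrium.

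I do not expect a genuine obstacle here; the only two points worth stating with care are that the equality test is \emph{syntactic}, so a program that merely mimics $\CwC$'s behavior without having the same source does not elicit cooperation, and that the bound in the case $\prog_1 \neq \CwC$ must be phrased as an expectation since programs are permitted to randomize. Everything else follows immediately from the payoff values in \Cref{table:prisoners-dilemma}.
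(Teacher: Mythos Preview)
Your argument is correct; the paper itself does not supply a proof of this proposition, treating it as immediate from the definition of $\CwC$ and the prior literature it cites. Your case split on whether the deviating program is syntactically equal to $\CwC$, together with the observation that any non-equal program faces $D$ and so earns at most $1<3$, is exactly the standard verification and there is nothing to add.
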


Next, I describe two logic-based approaches. The first determines whether there is a proof (using the axioms of Peano arithmetic ($\PA$)) that the opponent cooperates against it. If there is such a proof, it cooperates. Otherwise, it defects. This program was first published as Fair Bot by \citet{Barasz2014} but originated from unpublished work by Slepnev. I here adopt the name given to it by \citet{Critch2019} to distinguish it from a different variant of Fair Bot below.

\parbox{\linewidth}{
\begin{flushleft}                           
\textit{Defect unless proof of opponent cooperation ($\DUPOC$)}:\\
\textbf{Input}: opponent program $p_{-i}$, this program $\DUPOC$\\
\textbf{Output}: Cooperate or Defect
\begin{algorithmic}[1] 
\IF{$\PA \vdash p_{-i}(\DUPOC) = \mathrm{Cooperate}$}
\STATE \textbf{return} Cooperate
\ENDIF
\STATE \textbf{return} Defect
\end{algorithmic}
\end{flushleft}
}

\begin{proposition}[\citealp{Barasz2014}]\label{prop:DUPOC-works}
$(\DUPOC, \DUPOC)$ is a Nash equilibrium and yields $(C,C)$.
\end{proposition}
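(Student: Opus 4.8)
The plan is to get the $(C,C)$ outcome from L\"ob's theorem and then check the equilibrium condition by a short case analysis on deviations.

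\textbf{Outcome.} Let $\phi$ be the arithmetic sentence formalizing ``$\DUPOC(\DUPOC) = \mathrm{Cooperate}$''. Inspecting the code, $\DUPOC$ run against a copy of itself halts with output Cooperate exactly when the proof search inside it turns up a $\PA$-proof of ``$\DUPOC(\DUPOC) = \mathrm{Cooperate}$'', and otherwise fails to halt (which, by the halting convention, counts as Defect). Since $\PA$ can formalize and verify the finite description of this proof-search computation, one obtains $\PA \vdash \phi \leftrightarrow \Prov_{\PA}(\ulcorner \phi \urcorner)$, and in particular $\PA \vdash \Prov_{\PA}(\ulcorner \phi \urcorner) \to \phi$. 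By L\"ob's theorem, $\PA \vdash \phi$. Hence the proof search in $\DUPOC(\DUPOC)$ does succeed, so $\DUPOC(\DUPOC) = \mathrm{Cooperate}$ is actually true, and by symmetry the other copy cooperates too; thus $(\DUPOC,\DUPOC)$ yields $(C,C)$ with payoff $(3,3)$. The apparent circularity in $\phi$ (a sentence about $\DUPOC$, whose code mentions $\DUPOC$) is harmless: the setup grants each program access to its own source, so $\DUPOC$ is a well-defined fixed object and $\phi$ a genuine fixed sentence (equivalently, use a quine or the diagonal lemma).

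\textbf{Nash property.} Suppose Player 2 submits $\DUPOC$ and Player 1 deviates to an arbitrary program $p$; write $q = \Pr[p(\DUPOC) = \mathrm{Cooperate}]$. The value $\Pr[\DUPOC(p) = \mathrm{Cooperate}]$ lies in $\{0,1\}$, because $\DUPOC(p)$ cooperates iff $\PA \vdash p(\DUPOC) = \mathrm{Cooperate}$, which is a deterministic fact about $p$. If it equals $1$, then $\PA$ proves that cooperation claim, so by soundness of $\PA$ the claim is true, the realized profile is $(C,C)$, and Player 1 gets $3$ — no improvement over $3$. If it equals $0$, then $\DUPOC(p)=D$ and Player 1's expected payoff is $q\cdot u_1(C,D) + (1-q)\cdot u_1(D,D) = 1-q \le 1 < 3$ — again no improvement. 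By symmetry neither player can profitably deviate, so $(\DUPOC,\DUPOC)$ is a Nash equilibrium, which together with the first part gives the proposition.

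\textbf{Main obstacle.} The only genuinely delicate ingredient is the faithful arithmetization of program execution together with the soundness assumption. One must (i) justify that $\PA$ correctly represents ``$\DUPOC(\DUPOC)$ halts with output Cooperate'' as a $\Sigma_1$ sentence provably equivalent to $\Prov_{\PA}$ of that very sentence — this is precisely the hypothesis L\"ob's theorem consumes; and (ii) exclude $\PA$ proving a false cooperation claim about a deviator $p$. For (ii), consistency of $\PA$ already suffices when $p(\DUPOC)$ halts and defects (then ``$p(\DUPOC)=D$'' is a true, hence provable, $\Sigma_1$ sentence, so a proof of ``$p(\DUPOC)=C$'' would contradict consistency), and in full generality one invokes $\Sigma_1$-soundness, which is standard in this literature. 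For randomized $p$ one additionally has to fix a convention for the meaning of the sentence ``$p(\DUPOC) = \mathrm{Cooperate}$'', after which the dichotomy $\Pr[\DUPOC(p)=C]\in\{0,1\}$ and the convexity estimate above go through unchanged.
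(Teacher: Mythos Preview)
The paper does not actually prove this proposition; it cites \citet{Barasz2014} and merely remarks that the proof ``hinges on L\"ob's theorem''. Your argument is exactly the standard L\"obian proof that this citation points to, and it is also the same mechanism the paper spells out in its later proofs (e.g.\ for $(\DUPOC,\eGFB)$ and $(\PrudentBot_\theta,\CIMCIC)$): establish $\PA\vdash \Prov_{\PA}(\phi)\to\phi$ from the program's if-clause, then apply L\"ob. So on the outcome part you are doing precisely what the paper intends.

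Your treatment of the Nash property is more explicit than anything in the paper, which simply appeals to prior work for the best-response claim. The case split on whether $\DUPOC(p)=C$ or $D$ is correct, and you correctly flag that the ``$\DUPOC(p)=C\Rightarrow$ realized profile is $(C,C)$'' step needs a soundness assumption (consistency suffices when $p(\DUPOC)$ halts; $\Sigma_1$-soundness in general). One small comment: in the $\DUPOC(p)=C$ branch you write ``Player~1 gets $3$'', but if one were worried about the randomized-$p$ case before invoking soundness, the expected payoff would be $4-q$; the point is that soundness forces $q=1$, so $3$ is indeed correct. You handle this, but it may be cleaner to state up front that the interpretation of ``$p(\DUPOC)=\mathrm{Cooperate}$'' is ``cooperates with probability $1$'', so that soundness immediately gives $q=1$.
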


Proving \Cref{prop:DUPOC-works} hinges on L\"ob's theorem, a result from logic that I will use throughout this note.

\begin{lemma}[L\"{o}b's Theorem]
Let $P$ be any formula in $\PA$ and let $\Prov_{\PA}(P)$ be a formula in $\PA$ that is equivalent to $P$ being provable in $\PA$. Then:
\begin{equation*}
    \text{If }\PA\vdash \Prov_{\PA}(P) \implies P\text{, then }\PA\vdash P.
\end{equation*}
\end{lemma}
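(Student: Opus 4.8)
The plan is to reduce Löb's theorem to two standard facts about the provability predicate: the diagonal (fixed-point) lemma, which produces self-referential sentences, and the Hilbert--Bernays--Löb derivability conditions, namely (i) if $\PA \vdash A$ then $\PA \vdash \Prov_{\PA}(A)$; (ii) $\PA \vdash \Prov_{\PA}(A \to B) \to (\Prov_{\PA}(A) \to \Prov_{\PA}(B))$; and (iii) $\PA \vdash \Prov_{\PA}(A) \to \Prov_{\PA}(\Prov_{\PA}(A))$. I would take these as given and cite a logic textbook for them: verifying them requires the machinery of arithmetizing syntax, which is standard and orthogonal to the content of the theorem.

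So assume $\PA \vdash \Prov_{\PA}(P) \to P$. First I would invoke the diagonal lemma to obtain a sentence $\psi$ with $\PA \vdash \psi \leftrightarrow (\Prov_{\PA}(\psi) \to P)$; informally, $\psi$ asserts ``if I am provable, then $P$''. This self-reference is the crux of the argument, a formalized cousin of Curry's paradox.

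Next I would work inside $\PA$. From the left-to-right direction $\PA \vdash \psi \to (\Prov_{\PA}(\psi) \to P)$, applying (i) and then (ii) twice gives $\PA \vdash \Prov_{\PA}(\psi) \to (\Prov_{\PA}(\Prov_{\PA}(\psi)) \to \Prov_{\PA}(P))$. Discharging the inner antecedent with (iii) yields $\PA \vdash \Prov_{\PA}(\psi) \to \Prov_{\PA}(P)$, and composing with the hypothesis $\PA \vdash \Prov_{\PA}(P) \to P$ gives $\PA \vdash \Prov_{\PA}(\psi) \to P$. Now the right-to-left direction of the fixed point, $\PA \vdash (\Prov_{\PA}(\psi) \to P) \to \psi$, gives $\PA \vdash \psi$. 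Applying (i) once more yields $\PA \vdash \Prov_{\PA}(\psi)$, and a final use of $\PA \vdash \Prov_{\PA}(\psi) \to P$ concludes $\PA \vdash P$.

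The one step that needs care is the bookkeeping with the derivability conditions: the nested box $\Prov_{\PA}(\Prov_{\PA}(\psi))$ appears unavoidably after distributing twice, and it is precisely condition (iii) (internal necessitation) that lets it be absorbed --- without it the derivation stalls. Everything else is propositional reasoning inside $\PA$ together with the fixed-point equivalence. I would therefore present the proof modulo the diagonal lemma and the derivability conditions, which I regard as the genuinely laborious but well-known background.
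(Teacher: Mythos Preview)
Your proof is correct and is the standard textbook derivation of L\"ob's theorem via the diagonal lemma and the Hilbert--Bernays--L\"ob derivability conditions. However, the paper does not actually give a proof of this lemma: it is stated as a background result from mathematical logic and then invoked as a black box throughout the later propositions. So there is no proof in the paper to compare against; your argument simply supplies the well-known justification that the paper omits.
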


Note that whether a given statement is provable is undecidable, so DUPOC as given above is not actually a computer program in the usual sense. \citet{Critch2019} discusses a version of DUPOC that only tests whether there is a proof of at most, say, a million characters in length. DUPOC thus becomes a program. Critch shows that this version DUPOC still works, i.e., still makes \Cref{prop:DUPOC-works} true. For the purpose of this note, this subtlety matters little. Therefore, we will generally ignore it for simplicity.

I now introduce a variant of $\DUPOC$, introduced by \citet{Critch2022}. This one tries to prove that if it cooperates the opponent program will cooperate as well. (Note that this is a weaker claim.)

\parbox{\linewidth}{
\begin{flushleft}     
\textit{Cooperate If My Cooperation Implies Cooperation from the opponent (CIMCIC)}:\\
\textbf{Input}: opponent program $p_{-i}$, this program $\CIMCIC$\\
\textbf{Output}: Cooperate or Defect
\begin{algorithmic}[1] 
\IF{$\PA\vdash \CIMCIC(p_{-i}){=} \mathrm{Cooperate} \implies p_{-i}(\CIMCIC) {=} \mathrm{Cooperate}$}
\STATE \textbf{return} Cooperate
\ENDIF
\STATE \textbf{return} Defect
\end{algorithmic}
\end{flushleft}
}

\begin{proposition}[\citealp{Critch2022}]\label{prop:CIMCIC-works}
$(\CIMCIC, \CIMCIC)$ is a Nash equilibrium and yields $(C,C)$.
\end{proposition}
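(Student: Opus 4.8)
The plan is to prove the two halves of the statement in turn. \emph{First, the outcome.} When $\CIMCIC$ is run against a copy of itself, the arithmetical sentence whose provability it tests is $\CIMCIC(\CIMCIC){=}\mathrm{Cooperate} \implies \CIMCIC(\CIMCIC){=}\mathrm{Cooperate}$, an instance of $A\implies A$ and hence a theorem of $\PA$. So the proof search succeeds, $\CIMCIC(\CIMCIC)=\mathrm{Cooperate}$, the induced strategy profile is $(C,C)$, and each player receives $3$. In contrast to the corresponding step for $\DUPOC$ in \Cref{prop:DUPOC-works}, this half does \emph{not} invoke L\"ob's theorem: the triviality of $A\implies A$ is all that is needed.

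\emph{Second, the equilibrium property.} By symmetry it suffices to show that Player 1 cannot gain by unilaterally switching to some $\prog_1\in\PROG_1$ while Player 2 keeps $\CIMCIC$. Player 1 currently gets $3$, and, reading off \Cref{table:prisoners-dilemma}, the only outcome giving Player 1 strictly more than $3$ is $(D,C)$, i.e.\ Player 1 defecting while the $\CIMCIC$ opponent cooperates. So the crux is the ``unexploitability'' claim: \emph{for every $\prog_1$, if $\CIMCIC(\prog_1)=\mathrm{Cooperate}$ then $\prog_1(\CIMCIC)=\mathrm{Cooperate}$.} To establish it I would assume $\CIMCIC(\prog_1)=\mathrm{Cooperate}$; by the definition of the program, its proof search must then have succeeded, so $\PA\vdash\bigl[\CIMCIC(\prog_1){=}\mathrm{Cooperate} \implies \prog_1(\CIMCIC){=}\mathrm{Cooperate}\bigr]$. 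Since $\PA$ is sound, that implication is true, and together with the (true) hypothesis $\CIMCIC(\prog_1)=\mathrm{Cooperate}$ it gives $\prog_1(\CIMCIC)=\mathrm{Cooperate}$. Hence Player 1's payoff against $\CIMCIC$ is at most $3$ — a randomized $\prog_1$ fares no better, since if $\CIMCIC(\prog_1)=\mathrm{Cooperate}$ the same argument forces $\prog_1$ to cooperate back for a payoff of exactly $3$, and if $\CIMCIC(\prog_1)=\mathrm{Defect}$ Player 1 gets at most $1$ — so there is no profitable deviation and $(\CIMCIC,\CIMCIC)$ is a Nash equilibrium.

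I do not expect a genuine obstacle; what needs care is conceptual background rather than the argument itself. The whole proof runs in the metatheory and leans on the soundness of $\PA$, and it tacitly uses the diagonal lemma / quining to make sense of a program whose own code occurs inside the arithmetical formula it evaluates — precisely the setup of \Cref{sec:setup}. It is also worth recording that the unexploitability claim is exactly the property that makes $\CIMCIC$ safe to pair with other bots (presumably what the note's compatibility results exploit), and that under the bounded-proof-search version of $\CIMCIC$ — analogous to the bounded $\DUPOC$ mentioned above — a successful bounded search still witnesses $\PA$-provability, so soundness still applies and nothing changes.
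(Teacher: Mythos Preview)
Your argument is correct. The paper itself does not supply a proof of this proposition --- it is stated with attribution to Critch (2022) and left unproved --- so there is no in-paper argument to compare against; your route (the self-play sentence is the tautology $A\implies A$, so cooperation follows without L\"ob, and unexploitability then follows from soundness of $\PA$) is the standard one and is consistent with how the paper reasons about $\CIMCIC$ elsewhere, e.g.\ in showing $\CIMCIC(\eGFB)=C$.
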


Next, I introduce a generalization of \citeauthor{Barasz2014}'s (\citeyear{Barasz2014}) PrudentBot. PrudentBot is a variant of DUPOC. Its main goal is to defect against programs like CooperateBot that cooperate unconditionally. To achieve this, it cooperates only if it can prove that the opponent defects with high probability against DefectBot ($\DefectBot$), the program that defects unconditionally. We here introduce the probabilistic aspect to address the possibility of facing $\eGFB$, which cooperates with some small probability against DefectBot.

\parbox{\linewidth}{
\begin{flushleft}
$\PrudentBot_{\theta}$:\\
\textbf{Input}: opponent program $p_{-i}$, this program $\PrudentBot_\theta$\\
\textbf{Output}: Cooperate or Defect
\begin{algorithmic}[1] 
\IF{$\PA\vdash p_{-i}(\PrudentBot_\theta) = \mathrm{Cooperate}$ and $\PA+1\vdash P\left(p_{-i}(\DefectBot)=\mathrm{Defect}\right)\geq\theta$}
\STATE \textbf{return} Cooperate
\ENDIF
\STATE \textbf{return} Defect
\end{algorithmic}
\end{flushleft}
}

Here $\PA+1$ refers to $\PA$ plus the assumption that $\PA$ is consistent.

\begin{proposition}[\citet{Barasz2014}]
For all $\theta_1,\theta_2$, $(\PrudentBot_{\theta_1},\PrudentBot_{\theta_2})$ is a Nash equilibrium and yields $(C,C)$.
\end{proposition}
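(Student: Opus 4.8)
The plan is to prove two things: that the pair $(\PrudentBot_{\theta_1},\PrudentBot_{\theta_2})$ really produces the outcome $(C,C)$, so each player's on-path payoff is $3$; and that no unilateral deviation raises a player's payoff above $3$. Throughout I may assume $\theta_1,\theta_2\le 1$ (they are probability thresholds) and that $\PA$ is sound.

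For the on-path behaviour I would first discharge the second conjunct of each bot's cooperation test. Since $\DefectBot$ ignores its input, ``$\DefectBot(\PrudentBot_{\theta_i})=\mathrm{Defect}$'' is a true statement about a concrete terminating computation, so $\PA\vdash\DefectBot(\PrudentBot_{\theta_i})=\mathrm{Defect}$ and hence $\PA\vdash\Prov_{\PA}\!\left(\DefectBot(\PrudentBot_{\theta_i})=\mathrm{Defect}\right)$. Because $\PA+1$ proves the consistency of $\PA$, it follows that $\PA+1\vdash\neg\,\Prov_{\PA}\!\left(\DefectBot(\PrudentBot_{\theta_i})=\mathrm{Cooperate}\right)$, so $\PA+1$ proves that $\PrudentBot_{\theta_i}(\DefectBot)$ takes its else-branch, i.e.\ $\PA+1\vdash P\!\left(\PrudentBot_{\theta_i}(\DefectBot)=\mathrm{Defect}\right)=1\ge\theta_j$. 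This derivation is a fixed finite object, so $\PA$ itself proves that $\PA+1$ proves it. Writing $A$ for ``$\PrudentBot_{\theta_1}(\PrudentBot_{\theta_2})=\mathrm{Cooperate}$'' and $B$ for ``$\PrudentBot_{\theta_2}(\PrudentBot_{\theta_1})=\mathrm{Cooperate}$'', unfolding the definitions of the two bots then gives $\PA\vdash A\leftrightarrow\Prov_{\PA}(B)$ and $\PA\vdash B\leftrightarrow\Prov_{\PA}(A)$, the $\PA+1$-conjunct of each test having been absorbed as a $\PA$-provable truth.

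From here the argument is the familiar L\"obian one, now between two interlocked bots. Applying $\Prov_{\PA}$ to $B\leftrightarrow\Prov_{\PA}(A)$ gives $\PA\vdash\Prov_{\PA}(B)\leftrightarrow\Prov_{\PA}(\Prov_{\PA}(A))$; combining this with the standard provability fact $\PA\vdash\Prov_{\PA}(A)\to\Prov_{\PA}(\Prov_{\PA}(A))$ and with $\PA\vdash\Prov_{\PA}(B)\to A$ (from $A\leftrightarrow\Prov_{\PA}(B)$) yields $\PA\vdash\Prov_{\PA}(A)\to A$. L\"ob's theorem then gives $\PA\vdash A$, whence $\PA\vdash\Prov_{\PA}(A)$ and so $\PA\vdash B$; by soundness $A$ and $B$ both hold, so the outcome is $(C,C)$ with payoffs $(3,3)$. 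For the equilibrium property, suppose Player $1$ deviates to an arbitrary program $q$ (Player $2$ is symmetric). If $\PrudentBot_{\theta_2}(q)=\mathrm{Cooperate}$ then in particular $\PA\vdash q(\PrudentBot_{\theta_2})=\mathrm{Cooperate}$, so by soundness $q$ cooperates (with certainty) against $\PrudentBot_{\theta_2}$, the outcome is $(C,C)$, and Player $1$ still gets $3$. If instead $\PrudentBot_{\theta_2}(q)=\mathrm{Defect}$, then Player $2$ plays $D$ and Player $1$'s payoff is at most $u_1(D,D)=1<3$. Either way the deviation is not profitable.

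The step I expect to be the main obstacle is the bookkeeping across the two theories. One has to verify that the $\PA+1$-conjunct can genuinely be pushed through $\PA$ as a concrete provability fact, so that the fixed-point equivalences $A\leftrightarrow\Prov_{\PA}(B)$ and $B\leftrightarrow\Prov_{\PA}(A)$ come out clean and L\"ob's theorem applies unchanged; and, dually, that the only reason one may assert $\PA\nvdash\DefectBot(\PrudentBot_{\theta_i})=\mathrm{Cooperate}$ is precisely that this second test is evaluated in $\PA+1$ rather than in $\PA$. Once that interface is pinned down, the L\"obian core and the three-case payoff comparison are routine. A minor separate subtlety is how to read ``$q(\PrudentBot_{\theta_2})=\mathrm{Cooperate}$'' when $q$ is randomised; reading it as ``$q$ cooperates with probability $1$'' and again invoking soundness handles that case too.
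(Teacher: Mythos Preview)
Your proof is correct. The paper itself does not supply a proof of this proposition; it attributes the result to \citet{Barasz2014} and moves on. Your argument follows exactly the template the paper \emph{does} spell out for the neighbouring $(\PrudentBot_\theta,\CIMCIC)$ result: first discharge the $\PA+1$-conjunct by showing $\PA+1\vdash\PrudentBot_{\theta_i}(\DefectBot)=D$ (using consistency of $\PA$ to rule out $\Prov_{\PA}(\DefectBot(\PrudentBot_{\theta_i})=C)$), push that through as a $\PA$-provable $\Prov_{\PA+1}$-fact, reduce the situation to the clean fixed-point pair $A\leftrightarrow\Prov_{\PA}(B)$, $B\leftrightarrow\Prov_{\PA}(A)$, and finish with L\"ob. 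This is also the standard argument in the cited source, so there is no meaningful methodological difference to report.

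One tiny stylistic remark: when you write ``applying $\Prov_{\PA}$ to $B\leftrightarrow\Prov_{\PA}(A)$ gives $\PA\vdash\Prov_{\PA}(B)\leftrightarrow\Prov_{\PA}(\Prov_{\PA}(A))$'', you are silently using the derivability condition that $\Prov_{\PA}$ distributes over provable implications, not just necessitation; that is of course fine, but worth saying explicitly since the whole proof lives or dies on getting the provability-logic bookkeeping right.
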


Intuitively, we need $\mathrm{PA}+1$ because without the assumption of consistency of PA, we cannot prove for any input that $\DUPOC$, $\PrudentBot$, and $\CIMCIC$ do \textit{not} cooperate. (If PA is inconsistent, they always cooperate, because if-clause will trigger for any input.) In our proofs involving $\PrudentBot$, we will keep track of whether we are referring to provability in $\PA$ or $\PA+1$, whereas we won't do so when $\PrudentBot$ is not involved.

Note that for PrudentBot, the details of what proofs are searched over \parencite[as studied by][in the case of DUPOC]{Critch2019} matter relatively more and pose some open problems, see \citet[Open Problem 9]{Critch2022}.

Finally, I give the $\epsilon$-grounded Fair Bot, proposed by \citet{RobustProgramEquilibrium}. 

\parbox{\linewidth}{
\begin{flushleft}$\epsilon$-grounded Fair Bot ($\eGFB$):\\
\textbf{Input}: opponent program $p_{-i}$, this program $\eGFB$\\
\textbf{Output}: Cooperate or Defect
\begin{algorithmic}[1] 
\STATE With probability $\epsilon$:\\
\STATE \quad \textbf{return} Cooperate
\STATE \textbf{return} $p_{-i}(\eGFB)$
\end{algorithmic}
\end{flushleft}
}

\begin{proposition}[\citealp{RobustProgramEquilibrium}]
Let $\epsilon_1,\epsilon_2\in [0,\nicefrac{2}{3}]$ and $\epsilon_1>0$ or $\epsilon_2>0$. Then $(\epsilon_1\mathrm{GFB}, \epsilon_2\mathrm{GFB})$ is a Nash equilibrium and yields $(C,C)$.
\end{proposition}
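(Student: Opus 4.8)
The plan is to prove the two assertions separately: that the pair produces the outcome $(C,C)$ almost surely, and that neither player can strictly improve by a unilateral deviation. For the outcome, abbreviate $A=\epsilon_1\mathrm{GFB}$, $B=\epsilon_2\mathrm{GFB}$ and trace the execution of $A(B)$. Since $\eGFB$ never inspects whether its opponent is a copy of itself, $A(B)$ either returns Cooperate immediately (with probability $\epsilon_1$) or hands control to $B(A)$, which in turn either returns Cooperate (with probability $\epsilon_2$) or hands control back to $A(B)$, and so on; the call stack can only grow. The probability that the first $2k$ frames all recurse rather than halt is $\bigl((1-\epsilon_1)(1-\epsilon_2)\bigr)^{k}$, which tends to $0$ exactly because $\epsilon_1>0$ or $\epsilon_2>0$, so the computation terminates with probability $1$. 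Within this execution a frame can terminate \emph{only} by taking the Cooperate branch, so whatever value eventually propagates back up the stack is Cooperate. Hence $A(B)=C$ almost surely, and by the symmetric argument $B(A)=C$ almost surely, so the realized profile is $(C,C)$ and each player receives $u_i(C,C)=3$. (Equivalently, one may solve the fixed-point system $a=\epsilon_1+(1-\epsilon_1)b$, $b=\epsilon_2+(1-\epsilon_2)a$ for the two cooperation probabilities and read off $a=b=1$; the termination argument is what guarantees this is the right system.)

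For the equilibrium, consider without loss of generality a deviation by Player 1 to an arbitrary program $\prog_1$ while Player 2 keeps $B=\epsilon_2\mathrm{GFB}$; the other case is symmetric. Let $q:=\Pr[\prog_1(B)=C]$ be the marginal probability that Player 1's program cooperates against $B$ (a non-halting run counting as $D$). Then Player 1's realized action $X=\prog_1(B)$ satisfies $\Pr[X=C]=q$, while Player 2's realized action $Y=B(\prog_1)$ satisfies $\Pr[Y=C]=\epsilon_2+(1-\epsilon_2)q$, since $B$ returns Cooperate with probability $\epsilon_2$ and otherwise returns the output of a run of $\prog_1(B)$. The crucial point is that for these particular payoff numbers $u_1$ is additively separable, $u_1(x,y)=3\cdot\mathbf 1\{y=C\}+\mathbf 1\{x=D\}$, because $3+1=0+4$; hence $\mathbb E[u_1(X,Y)]=3\Pr[Y=C]+\Pr[X=D]$ depends on the joint law of $(X,Y)$ only through its marginals. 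Consequently
\[
\mathbb E[u_1(X,Y)]=3\bigl(\epsilon_2+(1-\epsilon_2)q\bigr)+(1-q)=1+3\epsilon_2+(2-3\epsilon_2)\,q ,
\]
and since $\epsilon_2\le\nicefrac{2}{3}$ the coefficient $2-3\epsilon_2$ is non-negative, so this is at most its value at $q=1$, namely $3$. As $\epsilon_1\mathrm{GFB}$ earns exactly $3$ against $\epsilon_2\mathrm{GFB}$ by the first part (this is where $\epsilon_1>0$ or $\epsilon_2>0$ is used), the deviation cannot be strictly profitable, and the same computation with the roles of the players exchanged handles Player 2.

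The main obstacle is carrying out the recursion and the deviation analysis rigorously rather than heuristically. In the first part one must genuinely establish almost-sure termination (the geometric bound above) and note that the recursion truly bottoms out in Cooperate rather than stalling; the hypothesis $\epsilon_1>0$ or $\epsilon_2>0$ is precisely what is needed there. In the second part the delicate issue is that a deviating $\prog_1$ may be arbitrarily complicated and possibly non-halting, and $B$'s internal invocation of $\prog_1$ is a priori a different run (with different coin flips) from the one determining Player 1's own action, so $X$ and $Y$ may be correlated in uncontrolled ways; the additive structure of the payoffs — the vanishing of the bilinear term, so that only marginal cooperation frequencies matter — is exactly what lets us collapse everything to the single scalar $q$. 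Finally, the bound $\nicefrac{2}{3}$ simply emerges as the largest $\epsilon_2$ for which $2-3\epsilon_2\ge 0$, beyond which Player 1 would strictly prefer to drive $q$ down, i.e.\ to defect.
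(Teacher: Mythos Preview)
The paper does not itself prove this proposition; it simply cites it from \citet{RobustProgramEquilibrium} and moves on. Your argument is correct and is essentially the one given in that reference: almost-sure termination of the mutual recursion via a geometric bound (using $\epsilon_1>0$ or $\epsilon_2>0$), the observation that every terminating branch returns Cooperate, and then a best-response calculation against $\epsilon_2\mathrm{GFB}$ that reduces to a linear function of the deviator's marginal cooperation probability $q$, with the $\nicefrac{2}{3}$ threshold emerging from the sign of the coefficient $2-3\epsilon_2$. Your explicit use of the additive separability $u_1(x,y)=3\cdot\mathbf 1\{y=C\}+\mathbf 1\{x=D\}$ to sidestep any correlation between the deviator's own run and the simulated run inside $\epsilon_2\mathrm{GFB}$ is a clean way to handle that issue.
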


\section{Compatibility results}

First, it is immediately clear that CwC is not compatible with any of the other four programs. That is, CwC will defect against the other bots and the other bots will defect against CwC. In this section, we show that the other four programs -- $\DUPOC$, $\CIMCIC$, $\PrudentBot$, $\epsilon$-grounded Fair Bot -- all cooperate with each other. Because the best response to each of the four programs is cooperate, it follows that each pair of these four programs forms a Nash equilibrium.


\subsection{Cross-compatibility of the proof-based approaches}

\citet[Theorem 5.2.b]{Critch2022} already show that DUPOC and CIMCIC cooperate against each other. \citet{Barasz2014} already show that PrudentBot and DUPOC cooperate with each other. It is easy to show that PrudentBot and CIMCIC are similarly compatible:

\begin{proposition}
For all $\theta$, $(\PrudentBot_\theta, \CIMCIC)$ is a Nash equilibrium and yields $(C,C)$.
\end{proposition}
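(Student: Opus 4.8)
The plan is to follow the pattern of the earlier propositions in this section and reduce the claim to mutual cooperation: since, as noted above, the unique best response to each of $\PrudentBot_\theta$ and $\CIMCIC$ is to cooperate, it suffices to establish $\PrudentBot_\theta(\CIMCIC) = \mathrm{Cooperate}$ and $\CIMCIC(\PrudentBot_\theta) = \mathrm{Cooperate}$, after which the Nash and $(C,C)$ claims are immediate. I take $\theta \in [0,1]$ (the only nontrivial range), assume as usual that $\PA$ is consistent, and --- following the conventions of this note --- suppress the bounded-proof-search subtleties, treating the in-bot ``$\PA \vdash \cdots$'' tests as well-behaved $\Sigma_1$ conditions.

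The first step I would carry out is to dispose of $\PrudentBot_\theta$'s extra conjunct, the $\DefectBot$-check, which is the one feature absent from the $\DUPOC$/$\CIMCIC$ compatibility argument. Since $\PA \vdash \DefectBot(q) = \mathrm{Defect}$ for every $q$, the implication $\CIMCIC$ searches for when facing $\DefectBot$ --- namely ``$\CIMCIC(\DefectBot) = \mathrm{Cooperate} \implies \DefectBot(\CIMCIC) = \mathrm{Cooperate}$'' --- is $\PA$-provably equivalent to ``$\CIMCIC(\DefectBot) = \mathrm{Defect}$''. Feeding this back through $\CIMCIC$'s definition yields $\PA \vdash \big( \CIMCIC(\DefectBot) = \mathrm{Defect} \iff \neg\Prov_{\PA}(\ulcorner \CIMCIC(\DefectBot) = \mathrm{Defect} \urcorner) \big)$, so this statement is, provably in $\PA$, a G\"obel-type sentence; a routine computation from the derivability conditions then upgrades this to $\PA \vdash \big( \CIMCIC(\DefectBot) = \mathrm{Defect} \iff \mathrm{Con}(\PA) \big)$, whence $\PA + 1 \vdash \CIMCIC(\DefectBot) = \mathrm{Defect}$. (Note that $\PA$ by itself does not prove this, which is precisely why $\PrudentBot$ needs $\PA+1$ here.) As $\CIMCIC$ is deterministic, this gives $\PA + 1 \vdash P\big( \CIMCIC(\DefectBot) = \mathrm{Defect} \big) = 1 \geq \theta$, so $\PrudentBot_\theta$'s second conjunct holds against $\CIMCIC$; and since this is a true $\Sigma_1$ statement, it is itself provable in $\PA$, a fact I will reuse.

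The second step is the L\"obian mutual-cooperation argument, a mild variant of the one already used for the other pairs of proof-based bots. Write $X := $ ``$\PrudentBot_\theta(\CIMCIC) = \mathrm{Cooperate}$'' and $Y := $ ``$\CIMCIC(\PrudentBot_\theta) = \mathrm{Cooperate}$''. Unwinding both definitions and using that $\PrudentBot_\theta$'s second conjunct against $\CIMCIC$ is $\PA$-provable, one obtains $\PA \vdash \big( X \iff \Prov_{\PA}(\ulcorner Y \urcorner) \big)$ and $\PA \vdash \big( Y \iff \Prov_{\PA}(\ulcorner Y \to X \urcorner) \big)$. I would then show $\PA \vdash \big( \Prov_{\PA}(\ulcorner X \wedge Y \urcorner) \to (X \wedge Y) \big)$: arguing inside $\PA$, from $\Prov_{\PA}(\ulcorner X \wedge Y \urcorner)$ one extracts $\Prov_{\PA}(\ulcorner Y \urcorner)$, hence $X$, and $\Prov_{\PA}(\ulcorner X \urcorner)$, hence --- since $\PA$ proves $X \to (Y \to X)$ --- also $\Prov_{\PA}(\ulcorner Y \to X \urcorner)$, hence $Y$. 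Applying L\"ob's Theorem to $X \wedge Y$ gives $\PA \vdash X \wedge Y$. Finally I translate back: $\PA \vdash Y$ makes $\Prov_{\PA}(\ulcorner Y \urcorner)$ true, which together with the (true) second conjunct makes $\PrudentBot_\theta$ actually return Cooperate, so $X$ holds; and $\PA \vdash X$ makes $\Prov_{\PA}(\ulcorner Y \to X \urcorner)$ true, so $\CIMCIC$ actually finds its required proof and returns Cooperate, so $Y$ holds. Both bots cooperate, which by the first paragraph is all that remained.

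The step I expect to be the main obstacle is the $\DefectBot$-check of the first step: it has no counterpart in the $\DUPOC$/$\CIMCIC$ argument, so one must verify from scratch that ``$\CIMCIC(\DefectBot) = \mathrm{Defect}$'' is $\PA$-provably equivalent to $\mathrm{Con}(\PA)$ and that the $\PA+1$-provability really goes through, with the bounded-proof-search caveats being (as the paper notes) more delicate for $\PrudentBot$ than for $\DUPOC$ --- though, as elsewhere in this note, I would elide them. The L\"obian core of the second step is, by contrast, routine given the arguments already given for the other pairs.
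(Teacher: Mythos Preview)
Your proof is correct and follows essentially the same two-step approach as the paper: first dispose of $\PrudentBot_\theta$'s $\DefectBot$-check by establishing $\PA+1 \vdash \CIMCIC(\DefectBot) = D$, then run a L\"obian argument to obtain mutual cooperation. The paper applies L\"ob's theorem to $Y = (\CIMCIC(\PrudentBot_\theta) = C)$ alone---chaining $\Prov_{\PA}(Y) \to X$ and $\Prov_{\PA}(X) \to Y$ through the derivability conditions---rather than to the conjunction $X \wedge Y$, and it asserts the $\DefectBot$-check result with a bare ``Clearly'' rather than your explicit G\"odel-sentence analysis (incidentally, ``G\"obel'' should read ``G\"odel''), but these are cosmetic differences.
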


\begin{proof}
Clearly,
\begin{equation*}
    \PA+1 \vdash \CIMCIC(\DefectBot)=D.
\end{equation*}
That is, one can prove using $\PA$ plus the assumption that $\PA$ is consistent that $\CIMCIC$ defects against DefectBot.
Hence,
\begin{equation*}
    \PA \vdash \mathrm{Prov}_{\PA+1}(\CIMCIC(\DefectBot)=D).
\end{equation*}
That is, one can prove in $\PA$ that it is provable in $\PA+1$ that $\CIMCIC$ defects against DefectBot.
From this and the definition of PrudentBot, it follows that
\begin{equation*}
    \PA\vdash \Prov_{\PA}(\CIMCIC(\PrudentBot_\theta)=C) \implies \PrudentBot_\theta(\CIMCIC) = C.
\end{equation*}
Meanwhile, by the definition of $\CIMCIC$,
\begin{equation*}
\PA \vdash \Prov_{\PA}(\PrudentBot_\theta(\CIMCIC) = C) \implies \CIMCIC(\PrudentBot_\theta) = C.
\end{equation*}
Putting the two together, we obtain that
\begin{equation*}
    \PA\vdash \Prov_{\PA}(\CIMCIC(\PrudentBot_\theta)=C)\implies \CIMCIC(\PrudentBot_\theta)=C.
\end{equation*}
By L\"o{b}'s theorem,
\begin{equation*}
    \PA\vdash \CIMCIC(\PrudentBot_\theta)=C.
\end{equation*}
Together with the first line of this proof, we have thus shown that $\PrudentBot_\theta$ cooperates. 
We can assume consistency of $\PA$ -- if $\PA$ is inconsistent, then the proposition holds trivially -- and thus conclude that $\CIMCIC$ cooperates as well.
\end{proof}

\subsection{Compatibility of $\epsilon$-grounded Fair Bot with the proof-based approaches}

It is left to study the compatibility of $\epsilon$GFB with $\DUPOC$, $\CIMCIC$ and $\PrudentBot$. An immediate issue is that this requires reasoning with $\PA$ about probability. I give a brief discussion of this issue in \Cref{appendix:PA-probability}.

\begin{proposition}\label{prop:DUPOC-eGFB-compatible}
$(\DUPOC, \eGFB)$ yields (C,C).
\end{proposition}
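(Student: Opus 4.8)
The plan is to reuse the L\"ob-style fixed-point argument behind \Cref{prop:DUPOC-works}, with one copy of $\DUPOC$ replaced by $\eGFB$. This should work because, from $\DUPOC$'s point of view, facing $\eGFB$ is essentially like facing a copy: since $\eGFB$ returns Cooperate with probability $\epsilon$ and otherwise returns $p_{-i}(\eGFB)$, the output $\eGFB(\DUPOC)$ is Cooperate with probability $1$ exactly when $\DUPOC(\eGFB)=C$ (assuming $\epsilon<1$; if $\epsilon=1$ then $\eGFB$ cooperates unconditionally and the claim is immediate). I would fix two arithmetical sentences: let $\phi$ be ``$\DUPOC(\eGFB)=C$'' and let $\psi$ be ``$\eGFB$ returns Cooperate against $\DUPOC$ with probability $1$'', the latter made precise using the formalization of reasoning about probabilistic programs in $\PA$ from \Cref{appendix:PA-probability}. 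Both sentences are well defined despite the mutual and self-reference, since each program has access to its own and its opponent's source code.

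Next I would record two ``local'' facts, each obtained by having $\PA$ inspect the relevant source code. First, by the definition of $\DUPOC$, the program cooperates against $\eGFB$ precisely when a $\PA$-proof of $\psi$ exists, and $\PA$ can verify this by simulating $\DUPOC$'s proof search, so $\PA \vdash \phi \leftrightarrow \Prov_{\PA}(\psi)$. Second, by the definition of $\eGFB$ together with $\epsilon<1$, $\PA \vdash \psi \leftrightarrow \phi$. From the second fact, internalizing gives $\PA \vdash \Prov_{\PA}(\psi)\leftrightarrow\Prov_{\PA}(\phi)$, and substituting into the first yields $\PA \vdash \phi \leftrightarrow \Prov_{\PA}(\phi)$, in particular $\PA \vdash \Prov_{\PA}(\phi)\to\phi$. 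L\"ob's theorem then gives $\PA \vdash \phi$.

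To conclude, $\PA \vdash \phi$ together with the second local fact gives $\PA \vdash \psi$, so by the definition of $\DUPOC$ the program $\DUPOC$ outputs Cooperate against $\eGFB$; that is, $\DUPOC(\eGFB)=C$ actually holds. Consequently $\eGFB$ facing $\DUPOC$ returns Cooperate on its $\epsilon$-branch and, since $\DUPOC(\eGFB)=C$, also on its copying branch, hence returns Cooperate with probability $1$. So the realized outcome is $(C,C)$, as claimed.

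I expect the only real difficulty to be a matter of care rather than of ideas: making the step ``$\PA \vdash \psi \leftrightarrow \phi$'' rigorous requires writing $\psi$ as an honest arithmetical sentence and reasoning inside $\PA$ about the probabilistic branch of $\eGFB$ (and about what it means for $\DUPOC$'s cooperation test to be applied to a randomized opponent) --- exactly the subtlety flagged just before the proposition and handled in \Cref{appendix:PA-probability}. As for \Cref{prop:DUPOC-works}, one also has to check that the argument still goes through for the bounded-proof-length implementation of $\DUPOC$, for which L\"ob's theorem remains available by the analysis of \citet{Critch2019}, and that the self-referential sentences $\phi$ and $\psi$ are set up correctly given that each program reads the other's code.
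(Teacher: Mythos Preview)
Your proposal is correct and follows essentially the same L\"ob-based argument as the paper. The only cosmetic difference is that you apply L\"ob's theorem to $\phi$ (``$\DUPOC(\eGFB)=C$'') after first showing $\PA\vdash\phi\leftrightarrow\psi$, whereas the paper applies L\"ob directly to $\psi$ (``$\eGFB(\DUPOC)=C$'') using only the one-directional implications $\Prov(\psi)\to\phi$ and $\phi\to\psi$; since you establish $\phi$ and $\psi$ are provably equivalent anyway, this is the same argument in slightly different packaging.
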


\begin{proof}
First notice that
\begin{equation*}
\vdash \mathrm{Prov}(\eGFB(\DUPOC)=C) \implies \DUPOC(\eGFB)=C.
\end{equation*}
That is, one can prove that if it is provable that $\eGFB(\DUPOC)=C$ then $\DUPOC(\eGFB)=C$. This is simply by the definition of $\DUPOC$. Second,
\begin{equation*}
\vdash \DUPOC(\eGFB)=C \implies \eGFB(\DUPOC)=C.
\end{equation*}
That is, one can prove that if $\DUPOC(\eGFB)=C$, then $\eGFB(\DUPOC)=C$. This is simply by the definition of $\epsilon\mathrm{GFB}$.
Putting the two together,
\begin{equation*}
    \vdash \mathrm{Prov}(\eGFB(\DUPOC)=C) \implies \eGFB(\DUPOC)=C.
\end{equation*}
By, L\"{o}b's theorem,
\begin{equation*}
    \vdash \eGFB(\DUPOC)=C.
\end{equation*}
By the definition of $\DUPOC$, $\DUPOC$ cooperates. By the definition of $\eGFB$, $\eGFB$ therefore also cooperates.
\end{proof}

\begin{proposition}
$(\CIMCIC, \eGFB)$ yields (C,C).
\end{proposition}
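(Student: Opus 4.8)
The plan is to follow the template of the proof of \Cref{prop:DUPOC-eGFB-compatible}. The argument is in fact more direct and, unlike that one, does not need Löb's theorem: the condition $\CIMCIC$ checks is the \emph{implication} ``if I cooperate then the opponent cooperates'' rather than an unconditional proof that the opponent cooperates, and against $\eGFB$ this implication can be verified outright.

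First I would record, as a consequence of the definition of $\eGFB$ alone, that
\begin{equation*}
\PA \vdash \bigl(\CIMCIC(\eGFB) = C \implies \eGFB(\CIMCIC) = C\bigr).
\end{equation*}
Indeed, $\eGFB$ run against $\CIMCIC$ returns Cooperate outright with probability $\epsilon$ and otherwise returns $\CIMCIC(\eGFB)$; so under the hypothesis $\CIMCIC(\eGFB) = C$ every branch of $\eGFB$'s execution outputs Cooperate, which is formalizable in $\PA$ directly from the source code of $\eGFB$. This is the only point at which the subtleties of reasoning in $\PA$ about the probabilistic program $\eGFB$ enter (cf.\ \Cref{appendix:PA-probability}); the relevant fact is just that, conditional on $\CIMCIC(\eGFB) = C$, the statement $\eGFB(\CIMCIC) = C$ holds deterministically, so that the displayed implication is an ordinary formula and is provable.

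Next I would observe that the displayed formula is exactly the statement whose $\PA$-provability triggers the if-clause of $\CIMCIC$ when its opponent is $\eGFB$ (i.e.\ with $p_{-i} = \eGFB$). Since $\PA$ proves it, $\CIMCIC$ cooperates: $\CIMCIC(\eGFB) = C$. Feeding this back into the definition of $\eGFB$ --- the probability-$\epsilon$ branch outputs $C$ and the copying branch outputs $\CIMCIC(\eGFB) = C$ --- yields $\eGFB(\CIMCIC) = C$, so the pair yields $(C,C)$. Note that, in contrast to several other proofs in this note, this argument needs neither Löb's theorem nor the consistency of $\PA$, since we exhibit an explicit $\PA$-proof of the relevant implication.

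I do not expect a real obstacle. The only delicate point, and it is minor, is the formalization of ``$\eGFB(\CIMCIC) = C$'' as a statement $\PA$ can handle, which is precisely why the argument is arranged so that this statement is asserted only under the hypothesis $\CIMCIC(\eGFB) = C$, under which it is deterministically true.
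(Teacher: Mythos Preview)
Your proof is correct and follows essentially the same approach as the paper: establish $\PA \vdash \bigl(\CIMCIC(\eGFB) = C \implies \eGFB(\CIMCIC) = C\bigr)$ from the definition of $\eGFB$, conclude $\CIMCIC(\eGFB)=C$ because this is exactly $\CIMCIC$'s cooperation condition, and then read off $\eGFB(\CIMCIC)=C$. Your added remarks --- that L\"ob's theorem is unnecessary here and that the probabilistic aspect is harmless because the implication holds deterministically --- are accurate and make explicit what the paper leaves implicit.
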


\begin{proof}
Clearly, by definition of $\eGFB$,
\begin{equation*}
    \vdash \CIMCIC(\eGFB) = C \implies \eGFB (\CIMCIC) = C.
\end{equation*}
That is, it is provable that if CIMCIC cooperates against $\eGFB$, then $\eGFB$ cooperates against $\CIMCIC$. Hence, by definition of $\CIMCIC$, $\CIMCIC(\eGFB) = C$. It then follows from the definition of $\eGFB$ that $\eGFB (\CIMCIC) = C$.
\end{proof}

\begin{proposition}
If $\PA+1$ is consistent and $\theta\geq 1-\epsilon$, then $(\PrudentBot_\theta, \eGFB)$ yields $(D,\epsilon * C + (1-\epsilon)*D)$. Otherwise, $(\PrudentBot_\theta, \eGFB)$ yields $(C,C)$.
\end{proposition}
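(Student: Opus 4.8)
The plan is to split on a single question: whether $\PA+1$ proves $P(\eGFB(\DefectBot)=D)\ge\theta$, which is the second conjunct of $\PrudentBot_\theta$'s cooperation test when the opponent is $\eGFB$. The preliminary observation is that $\DefectBot$ ignores its input and returns $D$ after a finite computation, so $\PA\vdash\DefectBot(\eGFB)=D$; hence $\eGFB$ outputs $D$ against $\DefectBot$ with probability exactly $1-\epsilon$ (the probability that its opening $\epsilon$-branch is not taken), and --- granting the formalization of probabilistic statements discussed in \Cref{appendix:PA-probability} --- $\PA\vdash P(\eGFB(\DefectBot)=D)=1-\epsilon$. Hence $\PA\vdash P(\eGFB(\DefectBot)=D)\ge\theta$ whenever $\theta\le 1-\epsilon$, and $\PA\vdash\lnot\bigl(P(\eGFB(\DefectBot)=D)\ge\theta\bigr)$ whenever $\theta>1-\epsilon$. (The borderline $\theta=1-\epsilon$ in fact lands on the cooperative side, since $\PA$ still proves the weak inequality there; I would state the defecting case with a strict inequality.)

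First I would handle the cooperative regime: $\theta\le 1-\epsilon$, or $\PA+1$ inconsistent. If $\theta\le 1-\epsilon$, then $\PA\vdash\Prov_{\PA+1}(P(\eGFB(\DefectBot)=D)\ge\theta)$ by the observation above. If instead $\PA+1$ is inconsistent, then $\PA\vdash\lnot\mathrm{Con}(\PA)$, so $\PA$ proves that $\PA+1$ is inconsistent and hence that $\Prov_{\PA+1}(\psi)$ for every $\psi$; in particular the same conclusion holds. Either way $\PA$ proves that $\PrudentBot_\theta$'s second conjunct is satisfied, so against $\eGFB$ the bot $\PrudentBot_\theta$ behaves like $\DUPOC$: it cooperates iff $\PA\vdash\eGFB(\PrudentBot_\theta)=C$. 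I would then rerun the Löbian argument of \Cref{prop:DUPOC-eGFB-compatible} with $\DUPOC$ replaced by $\PrudentBot_\theta$. From $\PrudentBot_\theta$'s definition together with the settled conjunct, $\PA\vdash\Prov_{\PA}(\eGFB(\PrudentBot_\theta)=C)\implies\PrudentBot_\theta(\eGFB)=C$; from $\eGFB$'s definition, $\PA\vdash\PrudentBot_\theta(\eGFB)=C\implies\eGFB(\PrudentBot_\theta)=C$; chaining these and applying L\"{o}b's theorem gives $\PA\vdash\eGFB(\PrudentBot_\theta)=C$, so both of $\PrudentBot_\theta$'s conjuncts genuinely hold, $\PrudentBot_\theta(\eGFB)=C$, and hence $\eGFB(\PrudentBot_\theta)=C$. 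Here ``$\eGFB(\PrudentBot_\theta)=C$'' means ``$\eGFB$ cooperates against $\PrudentBot_\theta$ with probability $1$''; since $\PrudentBot_\theta$ is deterministic this holds exactly when $\PrudentBot_\theta(\eGFB)=C$, which is what makes the second implication valid. The outcome is $(C,C)$, i.e.\ the ``otherwise'' clause.

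Next I would handle the defecting regime: $\PA+1$ consistent and $\theta>1-\epsilon$. Then $\PA$, hence $\PA+1$, proves $\lnot\bigl(P(\eGFB(\DefectBot)=D)\ge\theta\bigr)$, so consistency of $\PA+1$ rules out $\PA+1\vdash P(\eGFB(\DefectBot)=D)\ge\theta$; thus $\PrudentBot_\theta$'s second conjunct fails and $\PrudentBot_\theta(\eGFB)=D$. Feeding this into $\eGFB$, it returns $C$ with probability $\epsilon$ and $\PrudentBot_\theta(\eGFB)=D$ with probability $1-\epsilon$, so the realized profile is $(D,\ \epsilon * C+(1-\epsilon)*D)$.

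The step I expect to be the main obstacle is the one I am glossing over above: reasoning inside $\PA$ and $\PA+1$ about statements of the form ``$P(\text{program outputs }D)\ge\theta$''. One must fix the formalization from \Cref{appendix:PA-probability} so that $\PA$ proves the exact value $1-\epsilon$ (used in both regimes) and so that the mixed Löbian step goes through, namely the step in which an arithmetized probability claim about $\eGFB(\DefectBot)$ occurs inside a provability predicate alongside the usual self-referential claim $\eGFB(\PrudentBot_\theta)=C$. The remaining ingredients --- the trivial behavior of $\DefectBot$, the reading of $\PrudentBot_\theta$'s first clause against a randomizing opponent, and the borderline case --- are routine once that convention is in place.
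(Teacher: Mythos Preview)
Your proposal is correct and follows essentially the same route as the paper: dispose of the $\PA+1$-inconsistent case, show that for $\theta\le 1-\epsilon$ the second conjunct of $\PrudentBot_\theta$'s test is provably satisfied so that $\PrudentBot_\theta$ behaves like $\DUPOC$ against $\eGFB$ (and then invoke the L\"{o}bian argument of \Cref{prop:DUPOC-eGFB-compatible}), and for $\theta>1-\epsilon$ use consistency of $\PA+1$ to conclude the second conjunct fails. Your observation about the boundary is well taken: the paper's own proof also splits at $\theta>1-\epsilon$ versus $\theta\le 1-\epsilon$, so the strict inequality you suggest for the defecting clause is indeed what the argument actually establishes.
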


\begin{proof}
The case where $\PA+1$ is inconsistent is trivial. For the rest of this proof assume $\PA+1$ is consistent.

First, consider the case that $\theta> 1-\epsilon$. Because of the $\epsilon$-grounding, $\eGFB(\DefectBot)=D$ holds with probability less than $\theta$. Thus, assuming $\PA+1$ is consistent, $\PA+1$ cannot prove that $P(\eGFB(\DefectBot)=D)>\theta$. Hence, by definition of $\PrudentBot_\theta$, $\PrudentBot_\theta(\eGFB)=D$. Finally, $\eGFB(\PrudentBot_\theta)=\epsilon * C + (1-\epsilon)*D$ by definition of $\eGFB$.

Finally, consider the case $\theta\leq 1-\epsilon$. Note that $\PA+1\vdash P\left(\eGFB(\DefectBot)=\mathrm{Defect}\right)\geq\theta$. (In fact, the same holds if we replace $\PA+1$ with $\PA$.) It is thus easy to see that $(\PrudentBot_\theta, \eGFB)$ yields the same outcome as $(\DUPOC,\eGFB)$, which is $(C,C)$ by \Cref{prop:DUPOC-eGFB-compatible}.
\end{proof}

\section{Conclusion}

Putting all of the above together, we get the following result.

\begin{theorem}
Let $\epsilon\in (0,\nicefrac{2}{3}]$ and $\theta\leq 1-\epsilon$. Let $x,y\in \{ \DUPOC, \CIMCIC, \PrudentBot_\theta, \eGFB\}$. Then $(x,y)$ is a Nash equilibrium and yields $(C,C)$.
\end{theorem}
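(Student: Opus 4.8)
\emph{Proof plan.} The plan is to combine the results established above with one general observation. First I would note that both conclusions are symmetric: the Prisoner's Dilemma is a symmetric game, and both ``$(x,y)$ yields $(C,C)$'' and ``$(x,y)$ is a Nash equilibrium'' are unchanged when the two players are swapped, so it suffices to handle the ten unordered pairs $\{x,y\}$ with $x,y\in\{\DUPOC,\CIMCIC,\PrudentBot_\theta,\eGFB\}$. For the ``yields $(C,C)$'' part I would simply cite a proposition for each pair: the four self-pairings by \Cref{prop:DUPOC-works}, \Cref{prop:CIMCIC-works}, the PrudentBot proposition of \citet{Barasz2014}, and the $\eGFB$ proposition of \citet{RobustProgramEquilibrium} (applicable since $\epsilon\in(0,\nicefrac{2}{3}]$, which makes $(\eGFB,\eGFB)$ fall under its hypotheses); the pair $\{\DUPOC,\CIMCIC\}$ by \citet[Theorem 5.2.b]{Critch2022}, the pair $\{\PrudentBot_\theta,\DUPOC\}$ by \citet{Barasz2014}, and the pair $\{\PrudentBot_\theta,\CIMCIC\}$ by the proposition proved above; and the three pairs of $\eGFB$ with a proof-based bot by \Cref{prop:DUPOC-eGFB-compatible} and the two propositions proved above, where for $\{\PrudentBot_\theta,\eGFB\}$ the hypothesis $\theta\leq 1-\epsilon$ is exactly what selects the cooperative branch of that proposition. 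As in those propositions, the vacuous case where $\PA$ or $\PA+1$ is inconsistent would be disposed of separately. This gives $(C,C)$, hence payoff $3$ for both players, for every pair.

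It then remains to upgrade the pairs involving $\eGFB$ to Nash equilibria, since the corresponding propositions only assert the outcome and not that the pair is an equilibrium. For this I would invoke the fact already noted at the start of this section: the only way for a player to earn more than $3$ in the Prisoner's Dilemma is to defect while the opponent cooperates (payoff $4$), and none of the four bots can be induced to do this. For $\DUPOC$, $\CIMCIC$, $\PrudentBot_\theta$ this is because each of them returns $C$ only upon finding a proof of the relevant cooperation, so by soundness of $\PA$ (respectively $\PA+1$) each defects against any program that does not cooperate with certainty; a glance at the four cells of the payoff matrix then bounds the deviator's expected payoff by $3$. For $\eGFB$ I would compute directly: if $q$ is the probability that a candidate deviation $x'$ defects against $\eGFB$, then, the two executions of $x'$ being independent, the deviator's expected payoff works out to $3+q(3\epsilon-2)$, which is at most $3$ precisely because $\epsilon\leq\nicefrac{2}{3}$ (with equality only at $q=0$). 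Since each $(x,y)$ already yields $(C,C)$ and thus payoff $3$, neither player can profitably deviate, so $(x,y)$ is a Nash equilibrium.

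I expect the only substantive difficulty to lie in the $\eGFB$ arguments, and these have in effect already been carried out in the propositions above: the ``yields $(C,C)$'' direction needs L\"ob-style reasoning about formulas asserting that $\eGFB$ cooperates with a prescribed probability -- which is why the treatment of probability inside $\PA$ in \Cref{appendix:PA-probability} is needed -- and the non-exploitation direction needs the payoff against $\eGFB$ to come out linear in $q$ with slope $3\epsilon-2$. Everything else is bookkeeping: confirming that the enumeration of pairs is complete, and tracking where each hypothesis enters -- $\epsilon\leq\nicefrac{2}{3}$ only in the $\eGFB$ non-exploitation step, $\epsilon>0$ to make $\eGFB$ cooperate with cooperators and with itself, and $\theta\leq 1-\epsilon$ to keep $\PrudentBot_\theta$ from defecting against $\eGFB$.
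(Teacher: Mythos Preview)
Your proposal is correct and follows the same two-step structure as the paper: first aggregate the pairwise ``yields $(C,C)$'' results (the paper cites exactly the same propositions and external references you list), then argue that no deviation against any of the four bots can earn more than $3$.

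The only noteworthy difference is in the second step. The paper simply appeals to prior work (\citealp{Barasz2014}; \citealp{RobustProgramEquilibrium}; \citealp{Critch2022}) for the statement that the best response against each bot achieves mutual cooperation and nothing better. You instead supply the arguments directly: soundness of $\PA$/$\PA+1$ for the three proof-based bots, and the explicit computation $3+q(3\epsilon-2)\le 3$ for $\eGFB$. That computation is correct (the $q^2$ terms from the independent simulation cancel, leaving a payoff linear in $q$ with slope $3\epsilon-2$), and it makes transparent exactly where the bound $\epsilon\le\nicefrac{2}{3}$ is used. So your version is more self-contained, at the cost of being longer; the paper's version is terser but leans on the reader having internalized the cited best-response claims.
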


\begin{proof}
All the results of this paper together with the compatibility of CIMCIC and DUPOC \parencite[as proved by][Theorem 5.2.b]{Critch2022} and the compatibility of $\PrudentBot$ with $\DUPOC$ \parencite[as proved by][]{Barasz2014} show that any pair of these programs indeed cooperate with each other.

Furthermore, prior work \parencites{Barasz2014}{RobustProgramEquilibrium}{Critch2022} has shown that against each of these programs the best response achieves mutual cooperation (and nothing better). Thus, any of these programs is a best response to any of these programs.
\end{proof}

These results should be reassuring. $\DUPOC$, $\PrudentBot_\theta$, $\eGFB$ and $\CIMCIC$ were all proposed with the intention of cooperating robustly (more robustly than $\CwC$) in the Prisoner's Dilemma. We have shown that they have succeeded to the extent that these bots are robust w.r.t.\ not just syntactical details, but also w.r.t.\ which of these (fairly different) approaches to cooperative equilibrium one takes.

\section*{Acknowledgments}

I thank Nisan Stiennon, Vojta Kovařík and Chris van Merwijk for comments.

\begin{sloppypar}
\printbibliography
\end{sloppypar}

\begin{appendix}

\section{A short note about proofs about probability in PA}
\label{appendix:PA-probability}

Throughout this paper we use PA to reason about stochastic programs. Most prior work on proof-based FairBots has considered only deterministic programs \parencite[though see][]{Mennen2017}. I here give a brief, simple argument for why in the specific context of this paper this obstacle is easy to overcome without adding substantial assumptions to PA.

More specifically, we need $\PA$ (and $\PA+1$) to be able to reason about $\eGFB$. Thus, we first give a more detailed model of $\eGFB$. Our model of randomization will be that programs take as input an infinite randomly generated bitstring and then choose deterministically based on this bitstring and the other inputs. Then, for example, for $\epsilon = 2^{-n}$, $\eGFB$ can be written as the following program:

\parbox{\linewidth}{
\begin{flushleft}$\epsilon$-grounded Fair Bot ($\eGFB$):\\
\textbf{Input}: opponent program $p_{-i}$, this program $\eGFB$, infinite bitstring $b\in \{0, 1 \}^\omega$\\
\textbf{Output}: Cooperate or Defect
\begin{algorithmic}[1] 
\IF{$b[i]=0$ for $i=0,...,n-1$}
\STATE \quad \textbf{return} Cooperate
\ENDIF
\STATE \textbf{return} $p_{-i}(\eGFB)$
\end{algorithmic}
\end{flushleft}
}

Some proofs about $\eGFB$ may be subtle. For example, proving that $\eGFB$ halts against itself with probability $1$. Fortunately, all the proofs we need for the present paper can be done by distinguishing between $b$s that start with $0...0$ and $b$'s that do not start with string.

As an example, consider the case of $\PrudentBot_\theta$ versus $\eGFB$. For simplicity let $\theta=k\cdot 2^{-n}$. Then to prove $P(\eGFB(\DefectBot)=\mathrm{Defect})\geq \theta$, it suffices to show that for at least $k$ different values of the first $n$ bits of $b$, $\eGFB(\DefectBot,b)=\mathrm{Defect}$, which is easy (assuming it is true).

\end{appendix}

\end{document}